\newcommand{\midarrow}{\tikz \draw[-triangle 90] (0,0) -- +(.1,0);}
\newtheorem{lemma}{Lemma}
\theoremstyle{theorem}
\newtheorem*{theoremp}{Theorem}
\theoremstyle{remark}
\newcommand{\df}[1]{\textit{#1}}
\def\mb{\mathbin}
\def\ep{\varepsilon}
\def\t{\theta}
\def\Re{\mathbf{R}}
\title{Testing for separability is hard}
\author{Federico Echenique}
\thanks{Division of the Humanities and Social Sciences, Caltech. 
The first version of this paper was dated August 2013; the current version
is from January 2014. I am
  grateful to Thomas Demuynck and John Quah for comments on the first
  draft of this note.}
\begin{document}
\maketitle

\begin{abstract} This paper shows that it is computationally hard to
  decide (or test) if a consumption data set is consistent with
  separable preferences.
\end{abstract}

\section{Introduction}

The assumption of separable preferences is ubiquitous in
economics. Economists assume separability of preferences, virtually
without ever testing this assumption empirically.  Here I argue
that there is a reason for such lack of empirical scrutiny: 
The problem of deciding if a data set
is consistent with separable preferences is computationally
hard. There cannot exist a test that is practical on large data sets,
and that serves to test for separability.

Every empirical study on consumption assumes, explicitly or
implicitly, that preferences are separable. For example, data on
supermarket purchases are used in isolation from other consumption
decisions. Or  data on consumption in one year is used without regard
for intertemporal consumption decisions. The choice among different
goods is analyzed while ignoring any consumption/leisure tradeoffs, and
independently of the allocation  of financial assets. All such
analyses, which depend on certain compartmentalizations in the
economy, rely on the assumption on separability. It is hard to
imagine a paper in applied economics that does not make some use of
separability. 

A few authors have proposed tests for separability. \cite{varian1983} has a
test that involves solving a system of polynomial
inequalities. \cite{cherchye2011} provide a computational approach to handling
Varian's system of inequalities. 
\cite{quah2012} has a test which is finite, meaning that one
would need to check if the data fit a finite number of
configurations. These tests are all hard to take to data because they
are computationally hard, and may be infeasible in large datasets.

My contribution here is to show that separability is inherently
hard. Specifically, that is is NP complete. This implies that it is as
hard as any problem in the class NP,  a class of problems that contains
all the natural decision problems studied in computer science.  NP
complete problems are widely regarded as intractable.

A result similar to mine has already appeared in
\cite{cherchye2011}. The main difference is that their result is asymptotic
in the number of goods as well as 
the number of observations. Cherchye et.\ al. 
 proved that separability is hard if one has a large data
set {\em and many goods}. In my view, it is important to establish the
result with a fixed number of goods because most studies in economics
use only a handful of goods. The construction used in the proof of the
theorem below uses 9 goods, the same number as in the classical study
on consumption by \cite{deaton1974analysis}.\footnote{This means that the
  problem is hard already with 9 goods. But the construction can probably be
  improved to use an even smaller number of goods.}  The number 9 is not a
limitation of 
old data sets and classical studies; recent studies on consumption
also use a small number 
of goods (for example \cite{cherchye2011} use 15 goods).  More broadly
speaking,  asymptotics on the size of the dataset simply seem  more
fundamental than on the number of goods.  

Finally, another difference with \cite{cherchye2011} is that they
focus on conditions for concave separability. My result is on
separability alone.

\section{Testing  separability}

We take consumption space to be $\Re^{n+m}_+$. The set of available
goods is partitioned in two, and we write a consumption bundle as
$x=(z,o)\in \Re^{n+m}_+$.  There are $n$ goods of  ``type $z$'' and $m$
of ``type $o$.''

A \df{data set} is a collection $(x_k,p_k)$, $k=1,\ldots,K$ in which for
every $k$ $x_k\in \Re^{n+m}_+$ is a consumption bundle
purchased at prices $p_k\in \Re^{n+m}_{++}$. 
Let $x_k = (z_k,o_k)\in\Re^n_+\times \Re^m_+$.

A data set $(x_k,p_k)$, $k=1,\ldots,K$ is \df{rationalizable by
  separable preferences} if there are monotone increasing  functions
  $u:\Re^n_+\rightarrow \Re$ and $v:\Re^{1+m} \rightarrow \Re$ such
  that \[ 
 v(u(z),o) <  v(u(z_k),o_k)
\] for all $(z,o)\in \Re^{n+m}_+$ with 
$p^z_k \cdot z + p^o_k\cdot o \leq
p^z_k \cdot z_k + p^o_k\cdot o_k$ and $(z,o)\neq (z_k,o_k)$.

\begin{theoremp}
The problem of deciding if a dataset has a separable rationalization
is NP-complete.
\end{theoremp}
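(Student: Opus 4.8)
The plan is to establish the two halves of NP-completeness separately: membership in NP, and NP-hardness by a polynomial reduction. Membership is the routine half. I would use an Afriat/Varian-style combinatorial characterization of separable rationalizability, in which the only genuinely nondeterministic ingredient is a ranking of the inner bundles $z_1,\dots,z_K$ --- that is, ordinal utility levels $t_k$ standing for $u(z_k)$, together with the induced pattern of strict and weak comparisons. Given such a ranking, the existence of the required $u$ and $v$ reduces to the simultaneous solvability of two finite systems of linear (Afriat) inequalities: one saying that the levels $t_k$ are compatible with a monotone inner utility $u$ evaluated on the $z_k$, and one saying that the outer observations, read as choices of a composite inner good of quantity $t_k$ together with the outer goods $o_k$, satisfy the appropriate (GARP-type) no-cycle condition. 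A ranking has polynomial description, and once it is fixed each system is a linear feasibility problem, checkable in polynomial time; a nondeterministic machine guesses the ranking and verifies feasibility, so the problem is in NP.

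For hardness I would reduce from a suitable NP-complete problem --- a satisfiability problem such as 3-SAT is the natural choice. The characterization above isolates exactly where hardness can enter: we must pick a linear order of the inner bundles that extends the coordinatewise dominance order forced by monotonicity of $u$ and is at the same time compatible with the comparisons forced by the outer budgets. Two design primitives drive the reduction. First, observations that share an outer bundle $o_k$ but differ in their inner bundle compare under $v$ through $u$ alone, so by making one inner bundle strictly affordable at another's observation I can force $u(z_j) > u(z_k)$; and by placing inner bundles on an antichain (pairwise coordinatewise incomparable, which is possible for arbitrarily many points already in two dimensions) I leave their relative order free. Second, a variable $x$ of the formula is encoded by a pair of incomparable inner bundles whose free relative order records the truth value of $x$, and each clause is encoded by a configuration of observations that is separably rationalizable if and only if at least one of its literals points in the satisfying direction.

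I would then verify correctness in both directions: a satisfying assignment yields a ranking of all inner bundles, hence monotone $u$ and $v$ rationalizing the constructed data, while conversely any separable rationalization induces, through the forced comparisons and the clause configurations, a satisfying assignment. The main obstacle --- and the technical heart of the argument --- is the clause gadget: arranging prices and quantities so that the revealed-preference configuration enforces exactly a logical OR, neither ruling out satisfiable assignments nor admitting unsatisfiable ones. A second, genuinely constraining requirement is that the whole construction use a fixed number of goods regardless of the size of the formula; this forces every gadget to reuse the same handful of outer goods as shared comparison channels, so the delicate point is to scale and position the budgets so that distinct gadgets do not interfere through those shared channels. Once non-interference is established, the bookkeeping that nine goods suffice --- a fixed split into inner and outer types --- is immediate, and the equivalence ``data separably rationalizable $\iff$ formula satisfiable'' completes the reduction.
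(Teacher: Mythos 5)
Your architecture coincides with the paper's at every level: reduce from 3-SAT; encode each variable by a pair of coordinatewise-incomparable inner bundles in $\Re^2$ (an antichain, realized in the paper on the hyperbola $\t_2=1/\t_1$ so that each bundle's supporting price makes every other bundle strictly unaffordable); exploit the fact that two observations sharing their outer component can be compared only through $u$; build a clause gadget that is rationalizable iff at least one literal is satisfied; and prevent cross-clause interference by translating each gadget along one extra coordinate with rapidly growing offsets ($M2^l e_9$ in the paper, giving $2+6+1=9$ goods). So there is no disagreement about the route.

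The gap is that you stop at the \emph{specification} of the clause gadget rather than its construction, and that construction is essentially the entire technical content of the theorem; the rest (the order-extension arguments producing $u$ and $v$ from an acyclic relation, the non-interference bookkeeping) is routine. Concretely, the paper realizes the logical OR as a \emph{potential revealed-preference cycle} through nine bundles $w_1,\dots,w_9$: three chains $w_1\mb R w_2\mb R w_3$, $w_4\mb R w_5\mb R w_6$, $w_7\mb R w_8\mb R w_9$ are forced by the budgets, while the three return edges $w_3\to w_4$, $w_6\to w_7$, $w_9\to w_1$ can be supplied only by $u$, because each such pair shares its outer component and differs only in which member of a variable's bundle pair sits in the inner slot; the cycle therefore closes exactly when all three literals are assigned false. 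Making this work in fixed dimension requires two further devices you would have to discover: auxiliary intermediate bundles ($w_2,w_5,w_8$, perturbed by $\ep(e_1+e_2)$) because a single budget cannot reveal $w_1$ preferred to $w_3$ without also revealing it preferred to $w_4$ when $w_3$ and $w_4$ differ only in the inner component, and large penalties $M$, $2M$ on selected outer coordinates so that each budget excludes exactly the intended bundles and no others. Until you exhibit such a configuration and verify its revealed-preference table, the equivalence ``rationalizable iff satisfiable'' is asserted rather than proved. (Your NP-membership sketch is plausible in spirit but glib --- Varian's separability conditions are nonlinear, so you must say exactly what polynomial-size certificate is guessed and why verification becomes linear or combinatorial once it is fixed; note the paper itself silently omits this half.)
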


\section{Proof}

\subsection{Notation and definitions}

By $e_i$ we denote  the $i$th unit vector in $\Re^n$, that is the
vector  that has zero in every 
entry except the $i$th, in which it has a one. Write $e_{12}$ for
$e_1+e_2$. The \df{embedding} of $\Re^n$ into $\Re^m$, with $n< m$ is the
function that maps  a vector $x\in \Re^n$ into the vector
$(x,0,\ldots,0)$ in $\Re^m$, which coincides with $x_i$ in the first
$n$ entries and then has a zero in the remaining entries. 

A \df{graph} is a set $X$ together with a binary relation $R\subseteq
X\times X$. We write $x\mb R y$ for $(x,y)\in R$. A sequence
$x_1,\ldots x_K$ in $X$ is a \df{path} from $x_1$ to $x_K$ if \[x_K\mb
R  x_{K-1}\mb R x_{K-2} \cdots x_2 \mb R x_1.\] A graph, or the
binary relation $R$, is \df{acyclic} if for every pair $x$ and $x'$,
with $x\neq x'$, 
if there is a path from $x$ to $x'$ then there is no path from $x'$ to
$x$.

\subsection{Construction}

We shall reduce from three-satisfiability. 
Consider a formula with $L$ clauses, $C_1,\ldots, C_L$, involving the
variables $x_1,\ldots,x_I$.

The strategy for the reduction is as follows. We introduce a pair of
bundles $z^1_i$ and $z^2_i$ for each variable $i$. These bundles are
not comparable by revealed preferences, but they are embedded into a
configuration of prices and bundles such that any rationalizing
preference must reflect an assignment of truth/falsehood to each
variable that makes all the clauses $C_l$ true. This is accomplished
using separability: in fact separability is crucial to make the
construction work with a fixed number of goods. The bundles $z^1_i$
and $z^2_i$ live in $\Re^2$ and they are shifted by adding different
amounts of the other goods so that they can play the same role in
different clauses. By separability, the comparison between $z^1_i$ and
$z^2_i$ must be the same in all shifted instances.

We first (Step 1) develop the
construction for a single clause. Then (Step 2) we tie the different
clauses together. The formal construction  follows. 

For each variable $x_i$, define the following vectors in $\Re_+^2$: 
$z^1_i = (i,1/i)$ and $z^2_i = (i+1/2,1/(i+1/2))$. 
Let $\tau:\{x_1,\ldots,x_I\}\rightarrow \{0,1\}$ 
be a truthtable for the variables $x_1,\ldots,x_I$. Define a binary
relation $B$ on $\{z^q_i:q=1,2; i=1,\ldots,I \}$ from $\tau$ by 
$z^1_i \mb B z^2_i$ if $\tau(x_i)=1$ and 
$z^2_i \mb B z^1_i$ if $\tau(x_i)=0$.

Note that each of the vectors in  $\{z^q_i:q=1,2; i=1,\ldots,K \}$ 
lie on the boundary of the convex set $A=\{ (\t_1,\t_2):
\t_2\geq 1/\t_1 \text{ and } \t_1>0\}$. Define $p^q_i$ to be such that
the hyperplane \[ 
\{(\t_1,\t_2) : p^q_i\cdot (\t_1,\t_2) = 1\}
\] supports $A$ at $z^q_i$. (A simple calculation
reveals that $p^1_i=(1/(2i),i/2)$ and $p^2_i=(1/(2i+1)),i/2+1/4)$, but
this does not play a role in the sequel.)

As a consequence of these definitions, we obtain the
following \begin{lemma}\label{lem:two} For all $i,i'=1,\ldots,I$, and all
  $q,q'=1,2$, we have  \[p^q_i\cdot z^q_i=1 < p^q_i\cdot z^{q'}_{i'}\] when
  $i'\neq i$ and/or $q'\neq q$.
\end{lemma}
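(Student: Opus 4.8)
The equality $p^q_i\cdot z^q_i=1$ is immediate: by construction the hyperplane $\{(\t_1,\t_2):p^q_i\cdot(\t_1,\t_2)=1\}$ passes through $z^q_i$, since it supports $A$ there. All the content is therefore in the strict inequality, and the plan is to derive it from the strict convexity of $A$ rather than from the explicit formulas for $p^q_i$.

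First I would record that $A$ is the epigraph of the function $f(\t_1)=1/\t_1$ on $(0,\infty)$, which is strictly convex; hence $A$ is a strictly convex set. A supporting hyperplane to a strictly convex set at a boundary point meets the set in exactly that point, so $A\subseteq\{(\t_1,\t_2):p^q_i\cdot(\t_1,\t_2)\geq 1\}$ with equality holding only at $z^q_i$. I would fix the sign of this inclusion by noting that $p^q_i$ has positive coordinates: writing the tangent line to the graph of $f$ in the form $p^q_i\cdot(\t_1,\t_2)=1$ gives a positive coefficient on $\t_2$ (the curve has negative slope), so moving vertically upward into $A$ increases the functional, placing the unbounded part of $A$ in the half-space where the functional exceeds $1$.

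Next I would observe that every $z^{q'}_{i'}$ lies on the boundary of $A$, and that the $2I$ points $\{z^q_i\}$ are pairwise distinct: the first coordinates are the integers $i$ when $q=1$ and the half-integers $i+1/2$ when $q=2$, so no two coincide. Consequently, whenever $(i',q')\neq(i,q)$ the point $z^{q'}_{i'}$ is a boundary point of $A$ different from $z^q_i$, and strict convexity forces $p^q_i\cdot z^{q'}_{i'}>1$, as claimed.

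The only real subtlety is confirming strictness rather than merely $\geq$, which is exactly where strict convexity of $1/\t_1$ is used; this is what rules out any off-diagonal contact or two distinct support hyperplanes agreeing on some $z^{q'}_{i'}$. An alternative, more computational route would substitute the explicit expressions $p^1_i=(1/(2i),i/2)$ and $p^2_i=(1/(2i+1),i/2+1/4)$ and verify the inequalities by hand, but the convexity argument is cleaner and avoids precisely the algebra that the construction sets aside.
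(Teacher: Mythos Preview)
Your argument is correct and matches the paper's intent: the lemma is stated there merely ``as a consequence of these definitions,'' with no written proof, and the consequence in question is exactly the strict-convexity reasoning you spell out---each $z^{q'}_{i'}$ is a distinct boundary point of $A$, so the supporting hyperplane at $z^q_i$ separates it strictly. Your handling of the sign of the half-space via the positivity of $p^q_i$ is a nice touch that the paper leaves implicit.
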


To make the sequel easier to follow, we write $\rho(z^q_i)$ for
$p^q_i$. Define two positive numbers, $\ep$ and $M$ as follows.
 Let $\ep$ be such that 
\begin{equation}\label{eq:epsilon}
1 < \rho(z^q_i)\cdot z^q_i + \ep \rho(z^q_i)\cdot (1,1) <
\rho(z^q_i)\cdot z^{q'}_{i'}\end{equation} when 
  $i'\neq i$ and/or $q'\neq q$. In second place, let $M$ be such that 
\begin{equation}\label{eq:M}
\rho(z^q_i)\cdot z^{q'}_{i'} + 
\ep \rho(z^q_i)\cdot (1,1) < M\end{equation} when 
  $i'\neq i$ and/or $q'\neq q$.

\subsubsection{Step 1: The construction for a single clause.}

Consider a single clause $C$. Say that $C=y_i\vee y_j\vee y_h$, with
$y\in \{x,\bar x\}$.  Let 
$\hat z^q_i = z^q_i$ and $\hat p^q_i = p^q_i$ if $y_i = x_i$, and 
$\hat z^q_i = z^{3-q}_i$ and $\hat p^q_i = p^{3-q}_i$
if $y_i = \bar x_i$. Define $\hat z^q_j$, $\hat z^q_h$, $\hat p^q_j$,
and $\hat p^q_h$ analogously.

We are going to map the clause $C$ into a dataset in $\Re^6$. 

Consider the observations  
$\{(w_i,r_i):i=1,2,4,5,7,8\}$ and the set of bundles
$\{w_i:i=1,\ldots,9\}\subset \Re^8_+$ defined as follows:
\[ \begin{array}{ll}
w_1= \hat z^2_i + e_3 & r_1 = \rho(\hat z^2_i)+M e_3 + 2M(e_5+e_6) \\
w_2= \hat z^1_j+\ep(e_1+e_2) + e_4 & r_2 = \rho(\hat z^1_j) \\
w_3= \hat z^1_j + e_3 +e_7 & \\
w_4= \hat z^2_j + e_3 +e_7 & r_4= \rho(z^2_j) + Me_3 + 2M(e_4+e_6) \\
w_5= \hat z^1_h +\ep(e_1+e_2)+ e_5 & r_5=\rho(\hat z^1_h)\\
w_6= \hat z^1_h + e_3 +e_8 & \\
w_7= \hat z^2_h + e_3 +e_8 & r_7 = \rho(\hat z^2_h)+Me_3 + 2M(e_4+e_5)\\
w_8 = \hat z^1_i +\ep(e_1+e_2)+ e_6 & r_8=\rho(\hat z^1_i)\\
w_9 = \hat z^1_i + e_3 & \\
\end{array}\]
This means that bundles $w_i$ are purchased at prices $r_i$, for 
$i=1,2,4,5,7,8$. The bundles $w_3, w_6, w_9$ are added for
convenience. 

Let $X$ be the set of products taken from the first two, and the last
four, entries of the vectors $w_k$. That is, $X$ is the  
set of pairs $(z,o)$, with $z\in\Re^2$ and $o\in \Re^6$,
such that there is $z'$, $o'$, $w_k$ and $w_l$ 
with $w_k=(z,o')$ and $w_l=(z',o)$. Write $X_z$ for the projection of
$X$ onto $\Re^2$, and $X_o$ for the projection of
$X$ onto $\Re^6$; so $X = X_z\times X_o$.

The following tables contain the results of calculating $r_k\cdot
w_t$ (so that $r_k\cdot w_t$ is the content of the cell with row $r_k$
and column $w_t$). 

%\begin{landscape}

\[\begin{array}{l|l|l|l}
 & w_1 & w_2 & w_3    \\ \hline
r_1  & 1 + M 
& \rho(\hat z^2_i)\cdot \hat z^1_j +\ep \rho(\hat z^2_i)\cdot e_{12}
& \rho(\hat z^2_i)\cdot \hat z^1_j + M \\
r_2 
&\rho(\hat z^1_j) \cdot \hat z^2_i 
& 1 +\ep \rho(\hat z^1_j)\cdot e_{12}
& 1 \\
r_4
& \rho(z^2_j)\cdot \hat z^2_i + M 
& \rho(z^2_j)\cdot \hat z^1_j+\ep\rho(z^2_j)\cdot e_{12} +2M
& \rho(z^2_j)\cdot\hat z^1_j + M \\
r_5
& \rho(\hat z^1_h) \cdot \hat z^2_i 
& \rho(\hat z^1_h) \cdot \hat z^1_j+\ep \rho(\hat z^1_h) \cdot e_{12} 
& \rho(\hat z^1_h) \cdot \hat z^1_j  \\
r_7
& \rho(\hat z^2_h)\cdot \hat z^2_i + M
& \rho(\hat z^2_h)\cdot\hat z^1_j+\ep\rho(\hat z^2_h)\cdot e_{12} + 2M
& \rho(\hat z^2_h)\cdot\hat z^1_j + M \\
r_8
& \rho(\hat z^1_i)\cdot\hat z^2_i 
& \rho(\hat z^1_i)\cdot\hat z^1_j+\ep\rho(\hat z^1_i)\cdot e_{12} 
& \rho(\hat z^1_i)\cdot\hat z^1_j 
\end{array}\] 

%\end{landscape}

\[\begin{array}{l|l|l}
  & w_4 & w_5   \\ \hline
r_1  
& \rho(\hat z^2_i)\cdot \hat z^2_j + M
& \rho(\hat z^2_i)\cdot \hat z^1_h +\ep\rho(\hat z^2_i)\cdot e_{12}+ 2M\\
r_2 
& \rho(\hat z^1_j)\cdot \hat z^2_j 
& \rho(\hat z^1_j)\cdot \hat z^1_h +\ep\rho(\hat z^1_j)\cdot e_{12}\\
r_4
& 1 + M
& \rho(z^2_j)\cdot\hat z^1_h +\ep\rho(z^2_j)\cdot e_{12}\\
r_5
& \rho(\hat z^1_h) \cdot \hat z^2_j 
& 1 +\ep\rho(\hat z^1_h) \cdot e_{12}\\
r_7
& \rho(\hat z^2_h)\cdot\hat z^2_j + M 
& \rho(\hat z^2_h)\cdot\hat z^1_h +\ep\rho(\hat z^2_h)\cdot e_{12}+
2M \\
r_8
& \rho(\hat z^1_i)\cdot\hat z^2_j 
& \rho(\hat z^1_i)\cdot\hat z^1_h +\ep\rho(\hat z^1_i)\cdot e_{12}
\end{array}\] 

%\end{landscape}

%{\tiny
\[\begin{array}{l|l|l|l|l}
 & w_6 & w_7 & w_8 & w_9  \\ \hline
r_1
& \rho(\hat z^2_i)\cdot \hat z^1_h + M
& \rho(\hat z^2_i)\cdot \hat z^2_h + M
& \rho(\hat z^2_i)\cdot \hat z^1_i +\ep\rho(\hat z^2_i)\cdot e_{12}+ 2M
& \rho(\hat z^2_i)\cdot \hat z^1_i +M \\
r_2
& \rho(\hat z^1_j)\cdot \hat z^1_h 
& \rho(\hat z^1_j)\cdot \hat z^2_h 
& \rho(\hat z^1_j)\cdot \hat z^1_i +\ep\rho(\hat z^1_j)\cdot e_{12}
& \rho(\hat z^1_j)\cdot \hat z^1_i \\
r_4
& \rho(z^2_j)\cdot\hat z^1_h + M
& \rho(z^2_j)\cdot \hat z^2_h + M
& \rho(z^2_j)\cdot \hat z^1_i +\ep\rho(z^2_j)\cdot e_{12}+ 2M
& \rho(z^2_j)\cdot \hat z^1_i + M \\
r_5
& 1
& \rho(\hat z^1_h) \cdot\hat z^2_h 
& \rho(\hat z^1_h) \cdot \hat z^1_i +\ep\rho(\hat z^1_h) \cdot e_{12}
& \rho(\hat z^1_h) \cdot \hat z^1_i \\
r_7
& \rho(\hat z^2_h)\cdot\hat z^1_h + M
& 1 + M
& \rho(\hat z^2_h)\cdot\hat z^1_i +\ep\rho(\hat z^2_h)\cdot e_{12}
& \rho(\hat z^2_h)\cdot\hat z^1_i + M \\
r_8
& \rho(\hat z^1_i)\cdot\hat z^1_h 
& \rho(\hat z^1_i)\cdot\hat z^2_h 
& 1 +\ep\rho(\hat z^1_i)\cdot e_{12}
& 1
\end{array}\]
%}

Define the graph $(X,R)$ by letting $w \mb R w'$ if and only if $w=w_k$
for some $k$ and $r_k\cdot w_k > r_k\cdot w'$. Careful (if tedious)
inspection of the calculations above, (and using
inequalities~\eqref{eq:epsilon} and ~\eqref{eq:M}) reveal that 
that \begin{equation}\label{eq:defR} 
R = \{ (w_1,w_2), (w_2,w_3), (w_4,w_5), (w_5,w_6),
(w_7,w_8), (w_8,w_9)
\}.\end{equation} 

Let $\tau$ be a truth table for which our clause  $C=y_i\vee y_j\vee
y_h$ is true. That is: $\tau(y)=1$ for at least one $y\in \{y_i, y_j,
y_h\}$. Let $B$ be the binary relation induced by $\tau$ (see the
definition above). 

Note that $B$ is an acyclic binary relation on $\{\hat
z^q_k:q=1,2,k=i,j,h\}$. None of the vectors in $\{\hat
z^q_k:q=1,2,k=i,j,h\}$ is larger than the other (in the usual order on
$\Re^2$). There is therefore a function $u:\Re^2_+\rightarrow \Re$ for
which $u(z)>u(z')$ whenever $z\mb B z'$ or $z > z'$. 
%Let $R_Z$ be the
%order on $\Re^2_+$ represented by $u$. 

Let $R'$ be defined by: $w\mb R' w'$ if (1)
$w\mb R w'$ or (2) if there are $z,z'\in X_Z$ and $o$ such that
$w=(z,o)$, $w'=(z',o)$ and $u(z) > u(z')$. Say that pair $(w,w')$ is a
1-edge if $w\mb R' w'$ for reason (1) and a 2-edge if $w\mb R' w'$
for reason (2). 

\begin{lemma}\label{lem:above}
 $(X,R')$ is acyclic.\end{lemma}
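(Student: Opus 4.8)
The plan is to show that neither kind of edge can close a cycle by itself, that the way the two kinds interlock leaves only one candidate cycle, and that the satisfying assignment destroys that cycle.

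First I would record two structural facts. The $2$-edges cannot cycle on their own: every $2$-edge incident to a fixed second coordinate $o$ joins bundles $(z,o)$ and $(z',o)$ and points from the larger to the smaller value of $u$, so inside each ``layer'' $X_z\times\{o\}$ the $2$-edges refine the strict order induced by the function $u$; a closed walk of $2$-edges would remain inside one layer and contradict transitivity of $>$ on the finitely many numbers $u(z)$. The $1$-edges, by the definition of $R$, can emanate only from the six observed bundles $w_1,w_2,w_4,w_5,w_7,w_8$; in particular the auxiliary bundles $w_3,w_6,w_9$ and every product bundle in $X_z\times X_o$ that is not one of $w_1,\dots,w_9$ has no outgoing $1$-edge. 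Using Lemma~\ref{lem:two} together with \eqref{eq:epsilon} and \eqref{eq:M} to fix the sign of each difference $r_k\cdot w_k-r_k\cdot w$, I would confirm that among $w_1,\dots,w_9$ the $1$-edges are exactly those in \eqref{eq:defR}, so that they form the three length-two chains $w_1\to w_2\to w_3$, $w_4\to w_5\to w_6$, $w_7\to w_8\to w_9$.

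Next I would identify the only way a cycle can arise. Since a cycle must leave each layer through a $1$-edge, and $1$-edges begin only at observed bundles, I would pass to the quotient that collapses each bundle to its second coordinate $o$. The three chains then induce a single directed cycle on the layers, traversed in the cyclic order dictated by \eqref{eq:defR}; the only spots where an incoming $1$-edge does not already land on the observed bundle of its layer are the three junctions given by the common-$o$ pairs $(w_3,w_4)$, $(w_6,w_7)$ and $(w_9,w_1)$. Crossing the junction attached to a variable $k\in\{i,j,h\}$ requires a $2$-edge inside the corresponding layer, that is, the comparison of $u(\hat z^1_k)$ with $u(\hat z^2_k)$. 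Because the layers form one directed cycle, any directed cycle of $R'$ must pass through all three junctions with the same rotational orientation, so the unique candidate is $w_1\to w_2\to w_3\to w_4\to\cdots\to w_9\to w_1$, present exactly when all three junction comparisons point forward.

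Finally I would invoke the hypothesis on $\tau$. Tracing the definitions of $B$ and of $u$, the orientation of each junction encodes the truth value that $\tau$ assigns to the associated literal, and one checks that the three junctions point the same way around the cycle precisely when $\tau$ falsifies $C$; since $\tau$ satisfies $C=y_i\vee y_j\vee y_h$, at least one junction points against the common sense, the candidate cycle cannot close, and—no other cycle being available—$(X,R')$ is acyclic. I expect the main obstacle to be exactly the bookkeeping in the second step that the phrase ``careful (if tedious) inspection'' hides: one must verify that the many $1$-edges leaving the observed bundles (including those that fan out to an entire layer or to an entire $z$-column) together with the $2$-edges really do collapse to the single layer-cycle above, so that no unexpected short cycle through the product bundles $X_z\times X_o$ survives. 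Equivalently, and perhaps more cleanly, one may instead exhibit a potential $\Phi(z,o)=u(z)+g(o)$ that strictly increases along every edge, with the offsets $g$ chosen to satisfy the difference constraints coming from the $1$-edges; the $\ep$-shifts in $w_2,w_5,w_8$ provide the slack that makes these constraints feasible, and the sole obstruction to feasibility is precisely the forward $9$-cycle that the satisfying assignment rules out.
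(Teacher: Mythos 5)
Your proposal follows essentially the same route as the paper's own proof: rule out cycles made purely of 2-edges, observe that the 1-edges in \eqref{eq:defR} form a single directed cycle on the $o$-layers so that any cycle in $R'$ must traverse all of them, reduce the question to the three within-layer junction crossings at the pairs $(w_9,w_1)$, $(w_3,w_4)$, $(w_6,w_7)$, and use the satisfying assignment to block at least one crossing. Your layer-quotient phrasing, and your explicit remark that \eqref{eq:defR} must be checked against all of $X=X_z\times X_o$ rather than only $w_1,\dots,w_9$, merely make the paper's ``careful (if tedious) inspection'' step more explicit; the substance is identical.
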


\begin{proof}
Suppose, towards a contradiction, that there is a cycle. The cycle
cannot consist purely of 2-edges because each 2-edge implies an
increase in $u(z)$. Therefore
some of the edges must consist be 1-edges. Inspection of the graph
$(X,R)$ reveals that all of the edges in $R$ must then be part of this
cycle. The reason is that edges can connect $(z,o)$ and $(z',o')$ with
$o\neq o'$ only if they belong to $R$. Then a cycle can only be closed
if it involves {\em all} of the edges in $R$. Such a cycle would define a path
from $w_9$ to $w_1$, from $w_3$ to $w_4$, and from $w_6$ to
$w_7$. Each of these paths would involve only 2-edges. By definition
of 2-edges, then $u(w_1) > u(w_9)$,
$u(w_4) > u(w_3)$, and $u(w_7) > u(w_6)$. But $u(w_1) > u(w_9)$ can
only be true if $\hat z^2_i \mb B \hat z^1_i $. Similarly, we obtain
that $\hat z^2_j \mb B \hat z^1_j $ and $\hat z^2_h \mb B \hat z^1_h
$. This contradicts that $C$ is true under the truthtable $\tau$.
\end{proof}

%Now let $R'$ be the binary relation defined on $X$ by $w\mb R' w'$ if
%$w\mb R w'$ or if there are $\hat z^q_k$ and $\hat z^{q'}_k$, and such
%that $\hat z^{q}_k \mb B \hat z^{q'}_k$,
%$w=(\hat z^{q}_k,o)$ and $w'=(\hat z^{q'}_k,o)$, where $o\in \Re^6$.

%We shall prove that $R'$ has an extension to a linear order on $X$
%that is separable in $(2,4)$ and such that the projection of this
%extension onto $X_Z$ coincides with $R_Z$.

\begin{lemma}\label{lem:three}
There is a function $v$ such that $v(u(z),o) > v(u(z'),o')$ whenever 
$(z,o) \mb R' (z',o')$ or $u(z)\geq u(z')$ and $o > o'$.
\end{lemma}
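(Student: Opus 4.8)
The plan is to produce $v$ explicitly as a \emph{monotone} function, which disposes of the second clause of the statement (the comparison ``$u(z)\ge u(z')$ and $o>o'$'') automatically, and then to check by hand that the same $v$ respects every edge of $R'$. Concretely I would look for $v$ of the separable-linear form
\[ v(a,o)=a+\langle c,o\rangle, \]
with a coefficient vector $c\gg 0$ still to be chosen. For such a $v$ monotonicity is immediate, so condition (2) of the statement is satisfied for free, and so is every $2$-edge of $R'$: along $(z,o)\mathbin{R'}(z',o)$ one has $v(u(z),o)-v(u(z'),o)=u(z)-u(z')>0$, while along a monotone comparison $\langle c,o\rangle>\langle c,o'\rangle$ and $u(z)\ge u(z')$. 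Thus the whole problem collapses to choosing $c$ (and using the freedom that Lemma~\ref{lem:above} leaves in the choice of $u$) so that $v$ strictly decreases along each of the six $1$-edges of~\eqref{eq:defR}.

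Next I would write the six $1$-edge requirements as linear inequalities in the coordinates of $c$ and telescope them along the three directed paths $w_1\to w_2\to w_3$, $w_4\to w_5\to w_6$, $w_7\to w_8\to w_9$. The $\varepsilon$-shifts built into $w_2,w_5,w_8$ are exactly what give these telescopes slack; after the intermediate coordinates of $c$ cancel, each path reduces to one inequality relating the $u$-values at the two endpoints of the path to a difference of coordinates of $c$ (the first path, for instance, reduces to $u(\hat z^2_i)>u(\hat z^1_j)$ with a margin absorbing one coordinate of $c$). I would then verify that these finitely many cross-variable inequalities can be met by values of $u$ compatible with the within-variable orientation already forced by $B$; this is where the hypothesis that $\tau$ satisfies $C$ is used, since a satisfied clause fixes at least one within-variable comparison in the favourable direction and leaves enough room to orient the remaining cross-variable comparisons.

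The step I expect to be the main obstacle is precisely this interaction between condition (2) and the $1$-edges. In Lemma~\ref{lem:above} the only edges joining bundles over different $o$-coordinates were the $1$-edges of $R$, but a monotone comparison with $o>o'$ also crosses fibers (for example a bundle over $e_3+e_7$ to one over $e_3$), so the clean dichotomy exploited there fails, and a priori such a comparison could close a new cycle together with a $1$-edge path $w_1\to w_2\to w_3$. The point of the potential $v=u(z)+\langle c,o\rangle$ is to rule this out uniformly: once $c$ and $u$ are chosen so that $v$ decreases along every $1$-edge, it decreases along every $2$-edge and every monotone comparison as well, so the combined relation admits a strictly decreasing potential and is in particular acyclic, and the same $v$ both witnesses acyclicity and is the required function. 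Reassuringly, the inequality $u(\hat z^2_i)>u(\hat z^1_j)$ forced by the first path is exactly the condition that destroys the threatened cycle $w_1\to w_2\to w_3$ closed by a monotone comparison back to $w_1$, so the feasibility check and the absence of spurious cycles are one and the same computation; I expect this check to be short but to genuinely rely on $C$ being satisfied.
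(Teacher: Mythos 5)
Your route is genuinely different from the paper's: the paper never writes down $v$ explicitly, but instead forms a combined relation $R''$ on $u(\Re^2_+)\times X_o$ (the union of the $R'$-edges with the coordinatewise order), argues it is acyclic using Lemma~\ref{lem:above} together with the (near-)incomparability of the vectors in $X_o$, and then invokes a representation argument on a countable order-dense set to produce a possibly highly non-additive $v$. Your instinct that the fiber-crossing monotone comparisons are the delicate point is a good one (note for instance that $e_3+e_7>e_3$, so the $o$-components of $w_3$ and $w_1$ \emph{are} comparable, and a globally monotone $v$ does dispose of such comparisons for free). The problem is that the additive ansatz $v(a,o)=a+\langle c,o\rangle$ is too rigid to realize the six $1$-edges.

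Concretely: sum \emph{all six} inequalities $v(w_1)>v(w_2)>v(w_3)$, $v(w_4)>v(w_5)>v(w_6)$, $v(w_7)>v(w_8)>v(w_9)$, not just the two along each path. Every coordinate of $c$ cancels (each of $e_3,\dots,e_8$ contributes equally to both sides), and so does every $\varepsilon$-shifted value $u(\hat z^1_k+\varepsilon e_{12})$, leaving the necessary condition
\[ u(\hat z^2_i)+u(\hat z^2_j)+u(\hat z^2_h) \;>\; u(\hat z^1_i)+u(\hat z^1_j)+u(\hat z^1_h). \]
But any admissible $u$ must respect $B$, and by the definition of $B$ each literal $y_k$ that is true under $\tau$ forces $u(\hat z^1_k)>u(\hat z^2_k)$. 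The hypothesis that $\tau$ satisfies $C$ allows all three literals to be true, and in that case the displayed inequality is violated for every admissible $u$ and every $c$ --- indeed for every $v$ of the form $\phi(u(z))+\langle c,o\rangle$ with $\phi$ increasing, since $\phi\circ u$ still respects $B$. So the ``short feasibility check'' you defer cannot go through in general; you also have the favourable orientation backwards, since a \emph{true} literal pins the within-variable comparison against the telescoped inequality, not for it. Any correct proof must either abandon additive separability of $v$ (as the paper does, via the order-extension argument) or restrict attention to assignments in which the three within-variable comparisons do not all point the $u(\hat z^1_k)>u(\hat z^2_k)$ way --- and the latter is not something the lemma's hypotheses give you.
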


\begin{proof}
Let $R''$ be a binary relation on $u(\Re^2_+)\times \Re^4_+$ defined by 
$(s,o)\mathbin{R''} (s',o')$ if $(s,o)> (s',o')$ or if there is $z$
and $z'$ such that $s=u(z)$ and $s'=u(z')$ and 
$(z,o)\mathbin{R} (z,o)$. By Lemma~\ref{lem:above} and the observation
that non of the vectors in $X_o$ is comparable in the usual Euclidean
order, the relation $R''$ is
acyclic. The set $u(X_z)\times X_o \cup \mathbb{Q}\cap u(\Re^2_+)
\times \mathbb{Q}^4_+$ is countable and order dense, so there is a
function $v$ as required by the statement of  the lemma.
\end{proof}

\subsubsection{Step 2: The construction for $L$ clauses.}

For each clause $C_l$, define the bundles $w_t$ and prices $r_k$ as
above. Let $w^l_t$ be defined as the vector in $\Re^9$ obtained by
embedding $w_t$ and adding $M2^l e_9$. Let $r^l_k$ be the sum of $e_9$
and the  embedding of $r_k$ in $\Re^9$. As a result of these
definitions, $r^l_k\cdot w^k_t = r_k\cdot w_t + M2^l$. Importantly,
for a fixed $l$, the comparison of  
$r^l_k\cdot w^k_k$ and $r^l_k\cdot w^k_t$ is the same as the
comparison of $r_k\cdot w_k$ and $r_k\cdot w_t$ performed in Step~1. 

Define $X^l=X^l_z\times X^l_o$ from $w^l_1,\ldots,w^l_9$ in the same
way as $X=X_z\times X_o$ was 
defined. Define $R^l$ from $w^l_t$ and $r^l_k$ in the same way as $R$
was defined. Note that $r^l_k\cdot w^k_t$ only
differ from $r_k\cdot w_t$ in the constant $M2^l$, so the graphs 
$(X,R)$ and $(X^l,R^l)$  are the same once we identify $w_k$ with $w^l_k$. 

Let $\bar X = \cup_{l=1}^L X^l$. Define a binary relation $\bar R$ on
$\bar X$ by $w \mathbin{\bar R} w'$ iff there is some $l$ and some $t$ such
that $w=w^l_t$ and $r^l_t\cdot w^l_t > r^l_t\cdot w'$. Note
that $\bar R$ has the following properties \begin{enumerate}
\item 
 $\bar R$ coincides with $R^l$ on $X^l$ ($R^l = \bar R\cap X^l\times X^l$);
\item if $w\in X^l$ and  $w'\in X^{l'}$, with  $l'< l$, 
then $w \mathbin{\bar R} w'$ and it is false that $w' \mathbin{\bar R}
w$. 
\end{enumerate}

Let $\tau$ be a truthtable for which all clauses $C^l$ are true. Such
a truth table defines a binary relation $B$ on $\bar X_z = \cup_l
X^l_z$. The binary relation is acyclic, as there are no pairs of
subsequent edges in $B$. As in Step 1, there is a function
$u:\Re^2_+\rightarrow \Re$ for which $u(z)>u(z')$ whenever $z\mb B z'$
or $z > z'$.  
%Let $L$ be a linear order that extends $B$ on
%$\bar X_Z$.  volver

%Define the binary relation $\bar R''$ by  $w\mathbin{\bar R''} w'$ if (1)
%$w \mathbin{\bar R''} w'$ or (2) if there are $z,z'\in X_Z$ and $t$ such that
%$w=(z,t)$, $w'=(z',t)$ and $z\mb L z'$. 

Let $\bar R'$ be defined by: $w\mb R' w'$ if (1)
$w\mb R w'$ or (2) if there are $z,z'\in \bar X_z$ and $o$ such that
$w=(z,o)$, $w'=(z',o)$ and $u(z) > u(z')$. 

%Say that pair $(w,w')$ is a
%1-edge if $w\mb R' w'$ for reason (1) and a 2-edge if $w\mb R' w'$
%for reason (2). 

\begin{lemma}\label{lem:four}
$(\bar X,\bar R')$ is acyclic.
\end{lemma}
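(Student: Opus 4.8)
The plan is to imitate the proof of Lemma~\ref{lem:above}, but first to confine any hypothetical cycle to a single level $X^l$, after which the single-clause argument applies essentially verbatim. The structural fact that makes this work is that each $w^l_t$ carries the value $M2^l$ in its ninth coordinate, and that this coordinate belongs to the $o$-component rather than the $z$-component. I would accordingly assign to each $w\in\bar X$ a \emph{level}, namely the unique $l$ with $w\in X^l$; the sets $X^l$ are disjoint precisely because their ninth coordinates $M2^l$ differ across $l$.

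Next I would verify that whenever $w\mb{\bar R'}w'$ the level of $w'$ is no larger than the level of $w$. There are two kinds of edges to check. A $2$-edge joins $(z,o)$ to $(z',o)$ sharing the same $o$; since the ninth coordinate lives inside $o$, both endpoints carry the same $M2^l$ and hence sit at the same level. A $1$-edge is an edge of $\bar R$: by property~(1) an edge internal to some $X^l$ leaves the level unchanged, while by property~(2) an edge of $\bar R$ between distinct levels runs from the higher to the strictly lower level and never the reverse. So along any edge the level weakly decreases. A cycle, being a closed path, must return to its level of origin, so it can never strictly decrease the level; every one of its edges is then level-preserving and the cycle lies entirely within a single $X^l$.

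I would then observe that $\bar R'$ restricted to $X^l\times X^l$ is exactly the relation $R'$ constructed in Step~1 for the clause $C_l$. Indeed, by property~(1) its $1$-edges coincide with $R$ under the identification $w^l_k\leftrightarrow w_k$; the common ninth coordinate $M2^l$ is a constant that enters no comparison; and its $2$-edges are governed by $u$ on the $z$-vectors associated with $C_l$, whose order is the one induced by $B$ exactly as in Step~1. Since $\tau$ makes all clauses true, $C_l$ in particular is true under $\tau$, so Lemma~\ref{lem:above} applies to this level-$l$ copy and rules out a cycle in it. This contradicts the assumed cycle, establishing that $(\bar X,\bar R')$ is acyclic.

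The step I expect to be the main obstacle is pinning down the level monotonicity cleanly: verifying that a $2$-edge cannot change levels (which rests on the ninth coordinate being part of $o$, not $z$) and that property~(2) genuinely excludes every upward $1$-edge. Once monotonicity is in hand the confinement of the cycle is immediate, and the reduction to Lemma~\ref{lem:above} is just the recognition of each level-$l$ slice as an isomorphic copy of the Step~1 graph.
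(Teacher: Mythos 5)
Your proof is correct and follows essentially the same route as the paper's: use the level structure induced by the $M2^l e_9$ terms to confine any cycle to a single $X^l$, then invoke the single-clause acyclicity result (Lemma~\ref{lem:above}) on that copy. You are in fact slightly more careful than the paper, which appeals only to the second property of $\bar R$ and leaves implicit the check that $2$-edges cannot cross levels because the ninth coordinate sits in the $o$-component.
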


\begin{proof}
By the second property of $\bar R$, there cannot exist a cycle that
contains an edge going from $w\in X^l$ to
$w'\in X^{l'}$ with $l\neq l'$. Therefore any cycle must contain only
vertexes in some $X^l$. By Lemma~\ref{lem:two}, there is no such
cycle. 
\end{proof}%volver

%Since  $\bar R''$ is acyclic it has an extension to a linear order,
%wich by definition of $\bar R''$ must be separable because for any 
%$w=(z,t)$ and $w'=(z',t)$ in $X$ 
%$w \mathbin{\bar R''} w'$ or $w' \mathbin{\bar R''} w$ depending on
%whether  $z\mb L z'$ or  $z'\mb L z$. 

The following result, which finishes the proof,  
follows from Lemma~\ref{lem:four} in a similar
way to how Lemma~\ref{lem:three} follows from Lemma~\ref{lem:two}.

\begin{lemma}\label{lem:five}
There is a function $v$ such that $v(u(z),o) > v(u(z'),o')$ whenever 
$(z,o) \mathbin{\bar R'} (z',o')$ or $u(z)\geq u(z')$ and $o > o'$.
\end{lemma}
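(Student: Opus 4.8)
The plan is to follow the template of Lemma~\ref{lem:three}, replacing the single-clause objects by their Step~2 counterparts: the six $o$-coordinates of the single-clause construction become the seven $o$-coordinates of the embedded construction, $X_o$ and $X_z$ become $\bar X_o$ and $\bar X_z$, $R$ becomes $\bar R$, and the acyclicity input is Lemma~\ref{lem:four} rather than Lemma~\ref{lem:above}. So the target is a representation of a relation living on $u(\Re^2_+)\times\Re^7_+$.

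First I would push the relation $\bar R'$ forward along the map $(z,o)\mapsto(u(z),o)$ to obtain a relation $\bar R''$ on $u(\Re^2_+)\times\Re^7_+$: declare $(s,o)\mathbin{\bar R''}(s',o')$ when either $(s,o)>(s',o')$ in the product order, or there exist $z,z'$ with $s=u(z)$, $s'=u(z')$ and $(z,o)\mathbin{\bar R}(z',o')$. The point of this definition is that it simultaneously encodes both hypotheses of the lemma: the monotonicity requirement ``$u(z)\ge u(z')$ and $o>o'$'' is exactly a product-order edge, and the same-$o$, $u$-increasing edges of $\bar R'$ also become product-order edges, so that the only content carried by $\bar R''$ beyond pure monotonicity is the set of lifted $\bar R$-edges.

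The crux is to show that $\bar R''$ is acyclic. Monotone (product-order) edges form a strict partial order and so cannot by themselves close a cycle; hence any cycle must use at least one lifted $\bar R$-edge, and the endpoints of such an edge have their $o$-coordinate in the finite set $\bar X_o$. Here I would invoke that no two distinct vectors of $\bar X_o$ are comparable in the Euclidean order: a monotone edge strictly raises $o$, so it can never move between two distinct points of $\bar X_o$. Consequently, between two consecutive lifted $\bar R$-edges the monotone edges can only hold $o$ fixed while raising $s$, and the putative cycle collapses to a cycle of the relation on $u(\bar X_z)\times\bar X_o$ generated by the lifted $\bar R$-edges together with the same-$o$, $s$-increasing edges. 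But that relation is precisely the image of $\bar R'$, which is acyclic by Lemma~\ref{lem:four}; this contradiction establishes acyclicity of $\bar R''$. I expect this incomparability bookkeeping---checking that mixing monotone and lifted edges creates no new cycle---to be the main obstacle, though it is the same bookkeeping already done for Lemma~\ref{lem:three}.

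Finally, with $\bar R''$ acyclic I would produce $v$ by a standard order-representation argument. The set $u(\bar X_z)\times\bar X_o\,\cup\,(\mathbb{Q}\cap u(\Re^2_+))\times\mathbb{Q}^7_+$ is countable and order dense for $\bar R''$, so the classical Debreu-type theorem (a strict partial order admitting a countable order-dense subset has a real-valued order-preserving representation) yields a function $v$ with $v(s,o)>v(s',o')$ whenever $(s,o)\mathbin{\bar R''}(s',o')$. Since the monotone edges are built into $\bar R''$, this $v$ is automatically increasing in $o$; reading the conclusion back through $s=u(z)$ then gives exactly the statement of the lemma.
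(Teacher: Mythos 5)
Your proposal matches the paper's intended argument: the paper proves this lemma in a single line, by asserting that it follows from Lemma~\ref{lem:four} in the same way that Lemma~\ref{lem:three} follows from the acyclicity result of Lemma~\ref{lem:above}, and your write-up is exactly that transported argument (lift $\bar R'$ together with the product order to a relation on $u(\Re^2_+)\times\Re^7_+$, use the incomparability of the $o$-components plus Lemma~\ref{lem:four} for acyclicity, then obtain $v$ from a countable order-dense subset). You in fact supply more detail than the paper itself does, so this is the same approach, correctly executed.
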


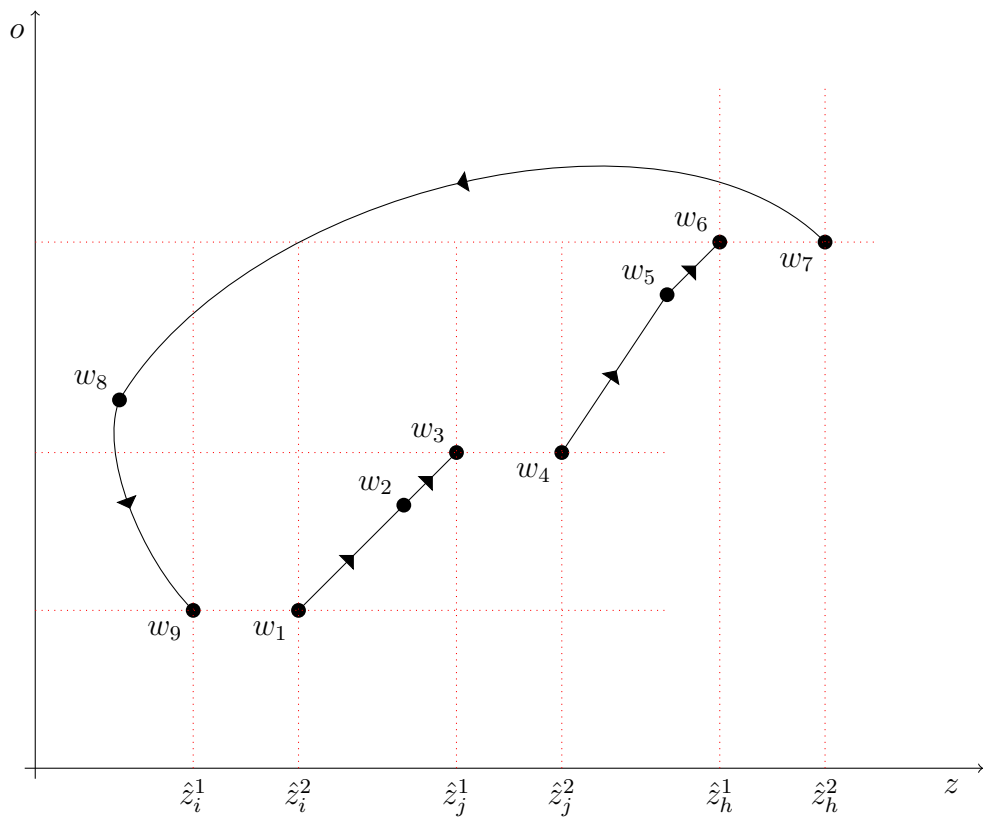
\begin{figure}[hbbbb]
\begin{tikzpicture}[scale=1.4]
\draw[->] (0,-.1) -- (0,7.2);
\draw[->] (-.1,0) -- (9,0);
%\draw[-] (0,2) -- (4.5,0); % p_1= 2/9, p_2=1/2
%\draw[-] (0,3.8) -- (2.5,0); % p'_1= 2/5, p'_2=10/38

\path (8.7,0) node[anchor=north] {$z$} ;
\path (0,7) node[anchor=east] {$o$} ;

\fill (1.5, 1.5) circle (2pt); \fill (2.5, 1.5) circle (2pt);
\path (1.5, 1.5) node[anchor=north east] {$w_9$} 
      (2.5,1.5) node[anchor=north east] {$w_1$}; 
\draw[-,dotted,red] (1.5,0) -- (1.5,5) ;
\path (1.5,0) node[anchor=north] {$\hat z^1_i$} ;
\draw[-,dotted,red] (2.5,0) -- (2.5,5) ;
\path (2.5,0) node[anchor=north] {$\hat z^2_i$} ;
\draw[-,dotted,red] (0,1.5) -- (6,1.5) ;

\begin{scope}[every node/.style={sloped,allow upside down}]
\draw[-] (2.5, 1.5) -- node {\midarrow}  (3.5, 2.5);
\draw[-] (5,3) -- node {\midarrow}  (6,4.5);
\draw[-] (6, 4.5) -- node {\midarrow}  (6.5 ,5) ;
\draw[->] (7.5, 5) .. controls  (6, 6.5) and (2, 5.5) ..  node {\midarrow} (.8, 3.5);
\draw[-] (.8, 3.5) .. controls  (.6, 3) and (1, 2) ..  node
{\midarrow} (1.5, 1.5);

\draw[-] (3.5, 2.5) -- node {\midarrow} (4,3) ;
\end{scope}

\fill (3.5, 2.5) circle (2pt);
\fill (4, 3) circle (2pt); \fill (5, 3) circle (2pt);
\path (3.5, 2.5) node[anchor=south east] {$w_2$} 
      (4,3) node[anchor=south east] {$w_3$}
      (5,3) node[anchor=north east] {$w_4$};

\draw[-,dotted,red] (4,0) -- (4,5) ;
\path (4,0) node[anchor=north] {$\hat z^1_j$} ;
\draw[-,dotted,red] (5,0) -- (5,5) ;
\path (5,0) node[anchor=north] {$\hat z^2_j$} ;

\draw[-,dotted,red] (0,3) -- (6,3) ;

\fill (6, 4.5) circle (2pt);% {$x$};
\fill (6.5, 5) circle (2pt); \fill (7.5, 5) circle (2pt);
\path (6, 4.5) node[anchor=south east] {$w_5$} 
      (6.5 ,5) node[anchor=south east] {$w_6$}
      (7.5, 5) node[anchor=north east] {$w_7$};
\draw[-,dotted,red] (6.5,0) -- (6.5,6.5) ;
\path (6.5,0) node[anchor=north] {$\hat z^1_h$} ;
\draw[-,dotted,red] (7.5,0) -- (7.5,6.5) ;
\path (7.5,0) node[anchor=north] {$\hat z^2_h$} ;
\draw[-,dotted,red] (0,5) -- (8,5) ;

\fill (.8, 3.5) circle (2pt);% {$x$};
\path (.8, 3.5) node[anchor=south east] {$w_8$} ;

%\draw[->] (4 , 2-16/9) -- + (.5*2/9,.5/2) node[anchor=west] {$p$};
%\draw[->] (1/3 , 3.8 - 3.8*2/15) -- + (.5*2/5,.5*10/38) node[anchor=west] {$p'$};

\end{tikzpicture}
\caption{The construction for a single clause.}\label{fig:one}
\end{figure}

\subsection{Some remarks on the proof}

1) The construction for one clause can be summarized in the  diagram depicted in
Figure~\ref{fig:one}. The horizontal axis represents $\Re^2$ and the
vertical axis $\Re^6$.
The directions of the arrows reflect the binary
relation $R$: $w_1\mb R w_2$ is denoted by the arrow pointing from
$w_1$ to $w_2$, and so on.
Note that the pairs of bundles $w_9$ and $w_1$,
$w_3$ and $w_4$, and $w_6$ and $w_7$ share their $o$ component. These
pairs are the only ones that share an $o$ component.

2) The presence of the bundles $w_2$, $w_5$ and $w_8$ may need an explanation. We
need to use them for the following reason. Consider the case of
$w_2$. We want to have $w_1\mb R w_3$, but not that $w_1\mb R
w_4$. This is difficult because $w_3$ and $w_4$ differ only in the $z$
component. By introducing $w_2$, which dominates $w_3$ but not $w_4$,
we can achieve the desired relations.

3) I have taken a shortcut in the proof by introducing
exponential quantities $w^l_t$. They are there to make sure that
certain quantities are large enough, and are easily avoided.

4) The definition of the bundles $z^q_k$ and
supporting prices $\rho(z^q_i)$ may involve using irrational
numbers, which is questionable from al algorithmic viewpoint. Since the
inequalities in Lemma~\ref{lem:two} are strict,
these numbers can be replaced with rational numbers to have the
construction only operate with ``discrete'' objects. (In a similar
fashion, the primitive dataset should only involve consumption bundles and
prices with rational entries.)

5) The main contribution here is to do the construction for a fixed
number of goods. 
If one is free to use any number of different
goods to capture the different edges, then it is easy to recreate any
given graph as a revealed preference binary relation. When the number
of goods is fixed, not all graphs can be revealed preference relations (the best
known example is the case when there are two goods, in which the weak
axiom of revealed preference suffices for rationalizability). So it is
important to be able to work with a rather specific graph, the one
depicted in Figure~\ref{fig:one}. The ability to do the reduction for
a fixed number of goods relies, among other things, on using 3SAT.

\bibliographystyle{econometrica}
\bibliography{separability}

\end{document}